\documentclass[12pt,a4paper]{article}

\usepackage{amsmath}
\usepackage{amsfonts}
\usepackage{amssymb}
\usepackage{amsthm}

\newcommand{\Spin}{{\mathrm{Spin}}}

\newcommand{\sll}{{\mathfrak{sl}}}   % \sl already defined
\newcommand{\SU}{{\mathrm{SU}}}
\newcommand{\U}{{\mathrm U}}
      % \O already defined

\newcommand{\so}{{\mathfrak{so}}}

\newcommand{\Cl}{{\mathrm{Cl}}} % Clifford algebra
      % \d already defined

     % Lie algebra
\newcommand{\R}{{\mathbb R}}   % Real numbers
\newcommand{\C}{{\mathbb C}}   % Complex numbers
\newcommand{\HH}{{\mathbb H}}   % Quaternions
\newcommand{\Oct}{{\mathbb O}}   % Octonions
\newcommand{\Z}{{\mathbb Z}}   % Integers
    % Natural numbers

 % real part

\DeclareMathOperator\End{End}

 %Pfaffian

  % character X bar

 %% fraction without the line

  % refers to Figure n.
  % refers to figure n.

\theoremstyle{definition}
\newtheorem{theorem}{Theorem} % remove * for numbering

\newtheorem{example}{Example}

\begin{document}

\title{Commuting Clifford actions}

\author{John W. Barrett
%\thanks{}
\\ \\
School of Mathematical Sciences\\
University of Nottingham\\
University Park\\
Nottingham NG7 2RD, UK\\
\\
E-mail john.barrett@nottingham.ac.uk}

\date{25th October 2024}
%\date{\today}

\maketitle

\begin{abstract} 
It shown that if a vector space carries commuting actions of two Clifford algebras, then the quadratic monomials using generators from either Clifford algebra determine a spinor representation of an orthogonal Lie algebra. 

Examples of this construction have applications to high energy physics, particularly to the standard model and unification. It is shown how to use Clifford data to construct spectral triples for the Pati-Salam model that admit an action of Spin(10).
\end{abstract}

\section{Introduction}
A Clifford module is a representation of a Clifford algebra on a complex vector space. A standard result is that taking quadratic monomials in the generators (gamma matrices) determines a spinor representation of the corresponding orthogonal Lie algebra. 

This paper studies  commuting actions of two real Clifford algebras,\\ $\Cl(p_1,q_1)$ and $\Cl(p_2,q_2)$, on a vector space $\mathcal H$. According to Theorems \ref{sorep} and \ref{spinrep}, the quadratic monomials in pairs of generators taken from either Clifford algebra determine a spinor representation of the Lie algebra $\so(q_1+p_2,p_1+q_2)$ on $\mathcal H$. This result is despite the fact that the gamma matrices from one Clifford algebra \emph{commute} with the gamma matrices of the other rather than anticommute, which would be the case if they both belonged to one larger Clifford algebra. Note that in some (but not all) cases the tensor product of two Clifford algebras can be made into one big Clifford algebra  \cite{h.blainelawsonSpinGeometry1990,barrettMatrixGeometriesFuzzy2015}, but this is \emph{not} the result reported here (although this does come into the proof of Theorem \ref{spinrep}).

The general results in this paper were inspired by several different examples (for specific $p_i$, $q_i$) that appear in \cite{fureyOneGenerationStandard2022} in connection with the fields of particle physics. In particular, one can take commuting actions of  $\Cl(0,q)$ and $\Cl(p,0)$ with any $p+q=10$, giving a representation of the unification group $\Spin(10)$ on a Hilbert space formed by the fermions of the standard model. 

For the case $q=6$ and $p=4$, the 
Clifford algebras are $\Cl(4,0)\cong M_2(\HH)$ and $\Cl(0,6)\cong M_8(\R)$, and the spin groups for each factor are $\Spin(4)$ and $\Spin(6)$.
Thus this structure determines, in a natural way,  a representation of the Pati-Salam group $G_{PS}=\Spin(4)\times\Spin(6)/\Z_2$ on $\mathcal H$ as a subgroup of $\Spin(10)$. The two subgroups lie in the even parts of these Clifford algebras, which are $\HH\oplus\HH$ and $M_4(\C)$. 

It is shown that a simple construction from this determines real spectral triples with Hilbert space $\mathcal H$ and algebra $\mathcal A=\HH\oplus\HH\oplus M_4(\C)$. The Clifford structures allow one to define an action of the unification group $\Spin(10)$ on the Hilbert space of these spectral triples.

\section{Spin group actions}

\subsection{Clifford modules}\label{sec:CliffordModules}
A real Clifford algebra \cite{chevalleyConstructionStudyCertain1955} is determined by a real vector space $V$ with a quadratic form $\eta$ (here assumed non-degenerate). A Clifford module \cite{atiyahCliffordModules1964} is a Clifford algebra together with a representation of the algebra in a complex vector space $\mathcal H$. Thus there is a map $c\colon V\to\End(\mathcal H)$, making $\mathcal H$ a left module for the algebra. Choosing a basis $\{e^a\}$ for $V$, the image of a basis vector is the `gamma matrix' $\gamma^a=c(e^a)$. The gamma matrices satisfy
\begin{equation}\gamma^a\gamma^b+\gamma^b\gamma^a=2\eta^{ab}.
\end{equation}
A Clifford module is said to be unitary if $\mathcal H$ is a Hilbert space and the operation $*$ of Hermitian conjugation in $\End(\mathcal H)$ is an involution on $V$. This makes the Clifford algebra a $*$-algebra.

The standard example of a Clifford algebra $\Cl(p,q)$ is for $V=\R^n$ with the quadratic form $\eta$ a diagonal matrix, with $p$ diagonal entries $+1$ and $q$ diagonal entries $-1$, with $n=p+q$. The unitary structure on a module for this Clifford algebra is chosen so that the gamma matrices $c(e^a)$, with $e^a$ the standard basis vectors, are unitary matrices (and are thus either Hermitian or anti-Hermitian).

The Clifford modules have standard definitions of a chirality operator $\gamma$ and an antilinear map $J$ called the real structure, with properties that depend on the signature of $\eta$, in fact on the parameter $s=q-p \mod 8$ \cite{barrettMatrixGeometriesFuzzy2015}. These properties are $J^2=\epsilon$, $J\gamma^a=\epsilon'\gamma^a J$ and $J\gamma=\epsilon''\gamma J$, with the signs given in Table \ref{signtable}.
\begin{table}
$$\begin{tabular}{|l|rrrrrrrr|}
  \hline
  s &0&1&2&3&4&5&6&7\\
\hline
$\epsilon$&1&1&-1&-1&-1&-1&1&1\\
%\hline
$\epsilon'$&1&-1&1&1&1&-1&1&1\\
%\hline
$\epsilon''$&1&1&-1&1&1&1&-1&1\\
  \hline
%&$\R$&$\C$&$\HH$&$\HH$&$\HH$&$\C$&$\R$&$\R$\\
%\hline
\end{tabular}$$
\caption{Signs for Clifford modules and real spectral triples.}\label{signtable}
\end{table}

It is worth explaining a bit more detail about these operators, as this will be useful later. The product of all of the gamma matrices in a Clifford module is denoted $P=\gamma^1\gamma^2\ldots\gamma^n$. A calculation shows that $P^2=(-1)^{s(s+1)/2}$, so the chirality operator is
\begin{equation}\gamma= i^{s(s+1)/2}P.\label{chirality}
\end{equation}
Obviously the definition of $P$ depends on the ordering of the basis, so $P$ and $\gamma$ are only determined by the Clifford module up to sign. 
An important point is that $P$ is in the Clifford algebra, so it is useful to express the structures in terms of $P$. In particular,
\begin{equation} JPJ^{-1}P^{-1}=\epsilon'.
\end{equation}

In the even $s$ cases, the table lists one real structure but there is a second one defined by
\begin{equation} \widehat J=JP
\end{equation}
This obeys $\widehat J^2=\epsilon''\epsilon$, $\widehat J\gamma^a=-\gamma^a \widehat J$ and $\widehat J\gamma=\epsilon''\widehat\gamma J$.

\subsection{Spin representations}\label{sec:SpinRep}
The quadratic monomials of $\Cl(p,q)$
\begin{equation}T^{ab}=\frac12\gamma^a\gamma^b
\end{equation}
with $a\ne b$, generate the Lie algebra $\so(p,q)$. Since $T^{ab}=-T^{ba}$, a spanning set is obtained by taking $a<b$. A calculation shows that the generators obey
\begin{equation}[T^{ab},T^{cd}]=\eta^{bc}\,T^{ad}-\eta^{ac}\,T^{bd}+\eta^{bd}\,T^{ca}-\eta^{ad}\,T^{cb}.
\end{equation}
Note that the generators $\widetilde T^{ab}=-T^{ab}$ obey the same relations but for the matrix $\widetilde \eta^{ab}=-\eta^{ab}$, which demonstrates the isomorphism $\so(p,q)\cong\so(q,p)$.

If the Clifford module is irreducible, the elements of $\mathcal H$ are called Dirac spinors (or pinors) and the representation of $\so(p,q)$ is called the Dirac spinor (or pinor) representation. 
\begin{table}
$$\begin{tabular}{|c|cccccccc|}
  \hline
  $s$ &0&1&2&3&4&5&6&7\\
\hline
\text{Structure}&$J,P$&$J$&$P$&$J$&$J,P$&$J$&$P$&$J$\\
\hline
%\text{Scalars}&$\R\oplus\R$&$\R$\,&\,$\C$\,&\,$\HH$\,&$\HH\oplus\HH$&$\HH$\,&\,$\C$\,&\,$\R$\\
%\hline
\end{tabular}$$
\caption{Structure maps 
%and scalars 
for the irreducible spinor representations of $\so(p,q)$.}\label{spinortable}
\end{table}

If $p+q$ is even (and greater than $0$), the Dirac spinor representation splits into two inequivalent representations of $\so(p,q)$ by the eigenspaces of the operator $P$. These representations are called Weyl spinor (or semispinor) representations, and are irreducible. 
Since $P$ is an invariant polynomial in the generators of $\so(p,q)$,
\begin{equation}P=\frac{2^{n/2}}{n!}\epsilon_{a_1\ldots a_n}T^{a_1a_2}T^{a_3a_4}\ldots T^{a_{n-1}a_n},\label{Ppolynomial}
\end{equation}
 it is a Casimir operator and so its eigenvalue characterises the representation. 

If $p+q$ is odd, the Dirac spinor is already an irreducible representation of $\so(p,q)$. Although in the odd case there are two inequivalent representations of the Clifford algebra, distinguished by the eigenvalues $\pm1$ of the chirality operator, the corresponding representations of $\so(p,q)$ are equivalent.% (this follows from the fact that replacing $\gamma^a$ by $-\gamma^a$ changes the chirality but not the representation).

Each irreducible representation of $\so(p,q)$ has structure maps $P$ and/or $J$ commuting with the Lie algebra action. These are shown in Table \ref{spinortable}.  Note that the table is symmetrical if one replaces $s$ with $-s$. For $s=2,6$, the eigenvalues of $P$ are $\pm i$ and so the map $J$ does not survive the projection to the Weyl spinors. However, $J$ relates the two Weyl spinor representations as complex conjugates of each other.

\subsection{Commuting actions}\label{sec:commuting}

Consider a finite-dimensional vector space $\mathcal H$ with commuting actions of two Clifford algebras. The following result is that this determines a representation of a spin group using the generators from both algebras. 

\begin{theorem} \label{sorep} Let $\mathcal H$ be a Clifford module for both $\Cl(p_1,q_1)$ and $\Cl(p_2,q_2)$, such that the two actions commute. Denote the gamma matrices for the first action by $\{\Gamma_1^a\}$ and for the second action $\{\Gamma_2^\alpha\}$.
 The quadratic monomials formed from pairs of different generators in the set $\{\Gamma_1^a,\Gamma_2^\alpha\}$ determine a representation of the Lie algebra $\so(q_1+p_2,p_1+q_2)$ on $\mathcal H$.
\end{theorem}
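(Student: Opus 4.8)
The plan is to write down candidate generators built from the quadratic monomials and verify directly that they satisfy the bracket relations of $\so(q_1+p_2,p_1+q_2)$ recorded in Section~\ref{sec:SpinRep}; the alternative route, of first assembling the two commuting actions into a single Clifford module, is more delicate and is better kept for Theorem~\ref{spinrep}.

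Write $n_i=p_i+q_i$ and take $\eta_1,\eta_2$ in their standard diagonal forms, so $(\Gamma_1^a)^2=\eta_1^{aa}$, $(\Gamma_2^\alpha)^2=\eta_2^{\alpha\alpha}$ and all off-diagonal components vanish. I would merge the two index sets and define a diagonal form $\widehat\eta$ on $\R^{\,n_1+n_2}$ by $\widehat\eta^{ab}=-\eta_1^{ab}$ on the first block, $\widehat\eta^{\alpha\beta}=\eta_2^{\alpha\beta}$ on the second, and $\widehat\eta=0$ between the blocks. Negating the first block sends its $p_1$ positive and $q_1$ negative entries to $p_1$ negative and $q_1$ positive, so $\widehat\eta$ has signature exactly $(q_1+p_2,\,p_1+q_2)$; negating the second block instead would produce $\so(p_1+q_2,q_1+p_2)$, isomorphic to the first by $\so(p,q)\cong\so(q,p)$. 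For the generators I would set $T^{ab}=-\tfrac12\Gamma_1^a\Gamma_1^b$ ($a\ne b$) on the first block, $T^{\alpha\beta}=\tfrac12\Gamma_2^\alpha\Gamma_2^\beta$ ($\alpha\ne\beta$) on the second, and $T^{a\alpha}=\tfrac12\Gamma_1^a\Gamma_2^\alpha$ for a mixed pair, extending to all ordered pairs by decreeing $T^{\alpha a}=-T^{a\alpha}$. This last sign is \emph{imposed}: it is not the one obtained by rewriting $\Gamma_1^a\Gamma_2^\alpha=\Gamma_2^\alpha\Gamma_1^a$, and making the labelling antisymmetric ``by hand'' is the one genuinely new feature compared with the single-Clifford-algebra case. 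A dimension count, $\binom{n_1}2+\binom{n_2}2+n_1n_2=\binom{n_1+n_2}2=\dim\so(q_1+p_2,p_1+q_2)$, shows these span the correct space.

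It then remains to check $[T^{AB},T^{CD}]=\widehat\eta^{BC}T^{AD}-\widehat\eta^{AC}T^{BD}+\widehat\eta^{BD}T^{CA}-\widehat\eta^{AD}T^{CB}$ in all cases. When both pairs lie in the first block this is the standard spinor-representation identity for $\Cl(p_1,q_1)$ with the sign flip already recorded for $\widetilde T^{ab}$ in Section~\ref{sec:SpinRep}, and likewise for the second block; when the two pairs lie one in each block the monomials commute and both sides vanish. The remaining cases carry at least one mixed generator, and the only new move is that a $\Gamma_2$ passes freely through a $\Gamma_1$. For a mixed bracket with all four indices distinct one gets $[\Gamma_1^a\Gamma_2^\alpha,\Gamma_1^b\Gamma_2^\beta]=\Gamma_1^a\Gamma_1^b\Gamma_2^\alpha\Gamma_2^\beta-\Gamma_1^b\Gamma_1^a\Gamma_2^\beta\Gamma_2^\alpha=0$, since reordering the $\Gamma_1$ pair and the $\Gamma_2$ pair each contributes a sign and these cancel; the right-hand side also vanishes, all four of its $\widehat\eta$ coefficients being between blocks. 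When a single index is shared the bracket reduces to a pure generator, e.g.\ $[T^{a\alpha},T^{a\beta}]=\eta_1^{aa}\,T^{\alpha\beta}$ and $[T^{a\alpha},T^{b\alpha}]=-\eta_2^{\alpha\alpha}\,T^{ab}$, which must be matched against $-\widehat\eta^{aa}T^{\alpha\beta}$ and $\widehat\eta^{\alpha\alpha}T^{ba}$; this works precisely because $\widehat\eta^{aa}=-\eta_1^{aa}$ and $T^{ab}=-\tfrac12\Gamma_1^a\Gamma_1^b$, so the metric flip and that choice of sign are forced here, not optional. Brackets of a pure with a mixed generator reduce to the standard Clifford identity $[\Gamma_1^a\Gamma_1^b,\Gamma_1^c]=2\eta_1^{bc}\Gamma_1^a-2\eta_1^{ac}\Gamma_1^b$ times the spectator $\Gamma_2$ factor, and again match. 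The linear map sending the abstract basis vector $T^{AB}$ of $\so(q_1+p_2,p_1+q_2)$ to the operator above is then a Lie algebra homomorphism into $\End(\mathcal H)$, which is the asserted representation. I expect the only real work to be this sign bookkeeping in the mixed brackets; the conceptual point is simply that the ``wrong'', commuting cross-relation is cancelled exactly by negating the quadratic form on one of the two blocks.
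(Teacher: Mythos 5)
Your proposal is correct and follows essentially the same route as the paper: the same generators (including the sign choice $T^{ab}=-\tfrac12\Gamma_1^a\Gamma_1^b$, $T^{\alpha a}=-T^{a\alpha}$), the same composite quadratic form $\widehat\eta=(-\eta_1)\oplus\eta_2$, and a direct verification of the $\so$ bracket relations. The paper packages this a bit more compactly by first writing the brackets of $T_1,T_2,U$ and then exhibiting the sign/reindexing map to the standard $\so(q_1+p_2,p_1+q_2)$ generators, whereas you verify the target relations case by case, but the content and the crucial sign bookkeeping in the mixed brackets are identical.
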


\begin{proof} Define generators of the Lie algebra by $T_1^{ab}=\frac12\Gamma_1^a\Gamma_1^b$, $T_2^{\alpha\beta}= \frac12\Gamma_2^\alpha\Gamma_2^\beta$ (for $a\ne b$ and $\alpha\ne\beta$) and $U^{a\beta}=\frac12\Gamma_1^a\Gamma_2^\beta$ for all $a,\beta$. For convenience, define $T_1^{aa}=T_2^{\alpha\alpha}=0$. Then the Lie brackets are
\begin{equation}
\begin{aligned}\relax
[T_1^{ab},T_1^{cd}]&=\eta_1^{bc}\,T_1^{ad}-\eta_1^{ac}\,T_1^{bd}+\eta_1^{bd}\,T_1^{ca}-\eta_1^{ad}\,T_1^{cb}\\
[T_2^{\alpha\beta},T_2^{\gamma\delta}]&=\eta_2^{\beta\gamma}\,T_2^{\alpha\delta}-\eta_2^{\alpha\gamma}\,T_2^{\beta\delta}+\eta_2^{\beta\delta}\,T_2^{\gamma\alpha}-\eta_2^{\alpha\delta}\,T_2^{\gamma\beta}\\
[T_1^{ab},U^{c\delta}]&=\eta_1^{bc}\,U^{a\delta}-\eta_1^{ac}\,U^{b\delta}\\
[U^{a\beta},T_2^{\gamma\delta}]&=\eta_2^{\beta\gamma}\,U^{a\delta}-\eta_2^{\beta\delta}\,U^{a\gamma}\\
[U^{a\beta},U^{c\delta}]&=\eta_1^{ac}\,T_2^{\beta\delta}-\eta_2^{\beta\delta}\,T_1^{ca}
\end{aligned}
\end{equation}
Note that these are not the generators of $\so(p_1+p_2,q_1+q_2)$ because of the signs in the last line, which are due to the fact that $\Gamma_1^a$ and $\Gamma_2^\alpha$ commute. Instead, the generators are mapped to those of $\so(q_1+p_2,p_1+q_2)$ by defining $n_1=p_1+q_1$ and
\begin{equation}
\begin{aligned}
&T^{ab}=-T_1^{ab}\\
&T^{\alpha+n_1,\beta+n_1}=T_2^{\alpha\beta}\\
&T^{a,\beta+n_1}=U^{a\beta}\\
&T^{\alpha+n_1,b}=-U^{b\alpha}\\
\end{aligned}\label{productrep}
\end{equation}
and $\eta=(-\eta_1)\oplus \eta_2$.
\end{proof}
Note that by exchanging the two actions, one can also describe the Lie algebra as $\so(q_2+p_1,p_2+q_1)$, which is isomorphic, as noted in Section \ref{sec:SpinRep}. The relative signature flip is a bit surprising, and one might wonder what happens with the commuting actions of three Clifford algebras. However in that case, the commutators of quadratic expressions do not close to a Lie algebra.

The representation determined by Theorem \ref{sorep} is determined by spinor representations of the Lie algebra, as one might expect. The Hilbert space splits into irreducible representations of the two Clifford algebras, so it suffices to consider this case.

\begin{theorem} \label{spinrep} Suppose that there are irreducible representations of $\Cl(p_1,q_1)$ on $\C^{k_1}$ with gamma matrices $\gamma_1^a$, and $\Cl(p_2,q_2)$ on $\C^{k_2}$ with gamma matrices $\gamma_2^\alpha$. Then the representation of the Lie algebra $\so(q_1+p_2,p_1+q_2)$ on $\C^{k_1}\otimes\C^{k_2}$ given by Theorem \ref{sorep} with $\Gamma_1^a=\gamma_1^a\otimes 1$ and $\Gamma_2^\alpha=1\otimes \gamma_2^\alpha$ is equivalent to a Dirac spinor representation if $n_1=p_1+q_1$ is even or $n_2=p_2+q_2$ is even. If both $n_1$ and $n_2$ are odd, the representation is one of the two Weyl spinor representations.
\end{theorem}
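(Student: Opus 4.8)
The plan is to exhibit the Theorem \ref{sorep} representation as (a piece of) an honest Clifford module for $\Cl(q_1+p_2,p_1+q_2)$, after which the classification recalled in Section \ref{sec:SpinRep}, together with a dimension count, identifies it. Write $W=\C^{k_1}\otimes\C^{k_2}$, and recall that an irreducible complex Clifford module of $\Cl(p,q)$ has dimension $2^{\lfloor(p+q)/2\rfloor}$, so that $\dim W=2^{\lfloor n_1/2\rfloor+\lfloor n_2/2\rfloor}$, while the Dirac module of $\Cl(q_1+p_2,p_1+q_2)$ has dimension $2^{\lfloor(n_1+n_2)/2\rfloor}$ and (for $n_1+n_2$ even) each Weyl module has dimension $2^{(n_1+n_2)/2-1}$.

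First I would treat the case in which $n_1$ is even; by swapping the two actions and using $\so(p,q)\cong\so(q,p)$ this also covers the case $n_2$ even. Let $\gamma_1$ be the chirality operator of the first module, so $\gamma_1^2=1$ and, because $n_1$ is even, $\gamma_1$ anticommutes with every $\gamma_1^a$. Set $\Gamma^a=\gamma_1^a\gamma_1\otimes 1$ for $a\le n_1$ and $\Gamma^{\alpha+n_1}=\gamma_1\otimes\gamma_2^\alpha$. A direct check, using only $\gamma_1^2=1$ and the anticommutation of $\gamma_1$ with the $\gamma_1^a$, shows that these satisfy the Clifford relations for the form $(-\eta_1)\oplus\eta_2$, i.e.\ those of $\Cl(q_1+p_2,p_1+q_2)$, and that $\tfrac12\Gamma^a\Gamma^b=-T_1^{ab}$, $\tfrac12\Gamma^{\alpha+n_1}\Gamma^{\beta+n_1}=T_2^{\alpha\beta}$ and $\tfrac12\Gamma^a\Gamma^{\beta+n_1}=U^{a\beta}$, which are exactly the generators \eqref{productrep}. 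Since $\dim W=2^{n_1/2+\lfloor n_2/2\rfloor}=2^{\lfloor(n_1+n_2)/2\rfloor}$ matches the dimension of the irreducible Clifford module of $\Cl(q_1+p_2,p_1+q_2)$, this module is irreducible and the representation is the Dirac spinor representation.

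The main obstacle is the case where $n_1$ and $n_2$ are both odd: then the top product of gamma matrices in each factor is central, hence scalar on an irreducible module, so there is no operator on $\C^{k_i}$ anticommuting with all the gamma matrices of that factor, the construction above has nothing to play the role of $\gamma_1$, and in any case $\dim W=2^{(n_1+n_2)/2-1}$ is only half the Dirac dimension. I would get around this by doubling: work on $W\otimes\C^2$ and put $\Gamma^a=i\gamma_1^a\otimes 1\otimes\sigma_x$ for $a\le n_1$ and $\Gamma^{\alpha+n_1}=1\otimes\gamma_2^\alpha\otimes\sigma_y$ with the Pauli matrices. These anticommute across the two blocks because $\sigma_x\sigma_y=-\sigma_y\sigma_x$, they satisfy the Clifford relations for $\Cl(q_1+p_2,p_1+q_2)$, and $\dim(W\otimes\C^2)=2^{(n_1+n_2)/2}$ is the Dirac dimension, so $W\otimes\C^2$ is the Dirac module. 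The key point is then that the chirality operator of this big algebra works out to $P=\Gamma^1\cdots\Gamma^{n_1+n_2}=c\,(1\otimes 1\otimes\sigma_z)$ for a nonzero scalar $c$, since $\gamma_1^1\cdots\gamma_1^{n_1}$ and $\gamma_2^1\cdots\gamma_2^{n_2}$ act as scalars on the irreducible factors. Hence the two $P$-eigenspaces are the copies $W\otimes\C_\pm$ of $W$, and on the $\sigma_z=-1$ eigenspace the quadratic monomials restrict to $-T_1^{ab}$, $T_2^{\alpha\beta}$, $U^{a\beta}$, which is again \eqref{productrep}. Thus the Theorem \ref{sorep} representation on $W$ is isomorphic to a $P$-eigenspace of the Dirac module of $\Cl(q_1+p_2,p_1+q_2)$, that is, to one of the two Weyl spinor representations, finishing the proof; apart from this doubling idea, the only care needed is to track the factors of $i$ so the construction lands in $\Cl(q_1+p_2,p_1+q_2)$ and not in a different signature.
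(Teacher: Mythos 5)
Your proposal is correct and follows essentially the same strategy as the paper: in both proofs one realises the Theorem \ref{sorep} generators as quadratic monomials of a genuine Clifford module for $\Cl(q_1+p_2,p_1+q_2)$, doubling the space by a $\C^2$ factor in the odd-odd case (your $\sigma_x,\sigma_y$ corresponding to the paper's $2\times2$ block matrices) and reading off the Weyl pieces from the eigenspaces of the top product $P$. The one cosmetic simplification is in the even case, where your choice $\Gamma^a=\gamma_1^a\gamma_1\otimes1$, $\Gamma^{\alpha+n_1}=\gamma_1\otimes\gamma_2^\alpha$ produces the generators of \eqref{productrep} on the nose, avoiding the intertwiner $V=\exp(i\pi\gamma_1/4)\otimes1$ that the paper introduces to convert its $i\gamma_1^a\otimes1,\ \gamma_1\otimes\gamma_2^\beta$ choice into the same form.
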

\begin{proof} Consider the odd-odd case. According to \cite{barrettMatrixGeometriesFuzzy2015}, an irreducible Clifford module of type $(p,q)=(q_1+p_2,p_1+q_2)$ is given by the gamma matrices
\begin{equation}\begin{pmatrix}0&\gamma_1^a\\-\gamma_1^a&0\end{pmatrix}\otimes 1, \quad\begin{pmatrix}0&1\\1&0\end{pmatrix}\otimes \gamma_2^\alpha
\end{equation}
Note that the first set of matrices square to $-\eta_1^{aa}$, so the signature for these is reversed.
The quadratic monomials for this Clifford module are
\begin{equation}\frac12\begin{pmatrix}-\gamma_1^a\gamma_1^b&0\\0&-\gamma_1^a\gamma_1^b\end{pmatrix}\otimes 1, \quad
\frac12\begin{pmatrix}\gamma_1^a&0\\0&-\gamma_1^a\end{pmatrix}\otimes \gamma_2^\beta, \quad
\frac12\begin{pmatrix}1&0\\0&1\end{pmatrix}\otimes \gamma_2^\alpha\gamma_2^\beta.
\end{equation}
generating the Lie algebra $\so(q_1+p_2,p_1+q_2)$.
 The matrix $t=\begin{pmatrix}1&0\\0&-1\end{pmatrix}\otimes 1$ commutes with these generators, and so the $\pm$  eigenspaces of $t$ are the two different spinor representations. However, the $t=1$ eigenspace is exactly the representation \eqref{productrep}, which proves the result.
 
 Now suppose $n_1$ is even. Then an irreducible Clifford module of type $(q_1+p_2,p_1+q_2)$ is given by the gamma matrices
\begin{equation}
i\gamma_1^a\otimes 1,\quad \gamma_1\otimes \gamma^\beta_2\label{evengamma}
\end{equation}
using the chirality operator $\gamma_1$, which anti-commutes with the $\gamma_1^a$.
Therefore the quadratic expressions formed from these,
\begin{equation}S^{ab}=-\frac12\gamma_1^a\gamma_1^b\otimes 1, 
\quad S^{a, n_1+\beta}= \frac i2\gamma_1^a\gamma_1\otimes\gamma^\beta_2,
\quad S^{n_1+\alpha, n_1+\beta}=\frac12 1\otimes \gamma^\alpha_2\gamma^\beta_2
\end{equation}
generate the Dirac spinor representation of $\so(q_1+p_2,p_1+q_2)$. Define the operator $V=\exp(i\pi\gamma_1/4)\otimes 1$ (which is unitary if the Clifford module is unitary). This has the property that $V(\gamma^a_1\otimes 1)V^{-1}=i\gamma_1\gamma^a_1\otimes 1$. This operator determines the equivalent representation
\begin{equation}
\begin{aligned}
VS^{ab}V^{-1}&=-\frac12\gamma_1^a\gamma_1^b\otimes 1=-T_1^{ab}\\
VS^{a,\beta+n_1}V^{-1}&=\frac12\gamma_1^a\otimes\gamma_2^\beta=U^{a\beta}\\
VS^{\alpha+n_1,\beta+n_1}V^{-1}&=\frac12 1\otimes \gamma^\alpha_2\gamma^\beta_2=T_2^{\alpha\beta}\\
\end{aligned}
\end{equation}
which agrees exactly with \eqref{productrep}. There is a similar argument if $n_2$ is even.
\end{proof}

It remains to construct the structure maps for the spin representations constructed in Theorem \ref{spinrep}.
It is useful to write $p=q_1+p_2$ and $q=p_1+q_2$. Then, $s=q-p=s_2-s_1$, with $s_i=q_i-p_i$. For the cases when $s$ is even, the operator $P$ can be defined so that its eigenvalues distinguish the Weyl spinor submodules in the results of Theorem \ref{spinrep}.
In the even-even case, multiplying the gamma matrices in \eqref{evengamma} gives
\begin{equation}P=(-1)^{n_1/2}P_1\otimes P_2,
\end{equation}
which is unchanged by the unitary transformation, i.e., $VP V^{-1}=P$. In the odd-odd case, multiplying the gamma matrices shows that 
\begin{equation}P=(-1)^{(n_1-1)/2} P_1\otimes P_2.
\end{equation}

An antilinear structure map $J$ for the tensor product is obtained from a pair of structure maps, one for each factor, providing they either both commute with the gamma matrices or both anti-commute with the gamma matrices. The formula
\begin{equation} J=J_1\otimes J_2\label{J}
\end{equation}
commutes with the Lie algebra in the even-even cases, the odd-odd cases where $s=0$ or $4$, or the even-odd (and odd-even) cases
where $s_1$ (or $s_2$) is $3$ or $7$.
In the even-odd cases  with $s_2=1$ or $5$ one has to take 
\begin{equation} J=\widehat J_1\otimes J_2
\end{equation}
and similarly in the odd-even cases with $s_1=1$ or $5$,
\begin{equation} J= J_1\otimes \widehat J_2.
\end{equation}
The remaining odd-odd cases don't have a general formula for $J$ because the Weyl spinor representation does not have an antilinear structure.
In the even-even cases there is also the antilinear structure map
\begin{equation} \widehat J=JP=(-1)^{n_1/2}\widehat J_1\otimes \widehat J_2.\label{Jhat}
\end{equation}

\section{Examples}
Instances of this construction in particle physics appear in \cite{fureyOneGenerationStandard2022}, including Examples \ref{ex:1} and \ref{ex:2}.

\begin{example}\label{ex:1} The quaternion algebra $\HH$ has an irreducible representation $\sigma\colon\HH\to M_2(\C)$. Applied to the 
 imaginary quaternions $q_1,q_2,q_3\in\HH$ this gives the gamma matrices $\gamma_1^a=\sigma(q_a)$ for a Clifford algebra  $\Cl(0,3)$ acting in $\C^2$. The number $\gamma_2^1=i$ acting in $\C$ is the single gamma matrix for the type $(0,1)$ Clifford algebra. The tensor product of the two underlying real algebras is $\HH\otimes_\R\C_\R=M_2(\C_\R)$, which acts in $\C^2\otimes\C\cong\C^2$ by matrix multiplication. The generators of $\so(3,1)$ are $T_1=\{\frac12\sigma(q_1),\frac12\sigma(q_2),\frac12\sigma(q_3)\}$ and $U=\{\frac i2\sigma(q_1),\frac i2\sigma(q_2),\frac i2\sigma(q_3)\}$ and the representation in $\C^2$ is a Weyl spinor representation. This agrees with the holomorphic representation of $\sll(2,\C)$.
\end{example}

\begin{example}\label{ex:2} The quaternions $\HH$ and octonions $\Oct$ can be considered real vector spaces. Let $\mathcal H=\C\otimes_\R\HH\otimes_\R\Oct$, a complex vector space by scalar multiplication in the first factor. This has the structure of a non-associative algebra (the Dixon algebra). Left multiplication by the imaginary octonions generates a module for $\Cl(0,7)$ and left multiplication by $i$ times the imaginary quaternions generates a module for $\Cl(3,0)$. 
Applying Theorem \ref{spinrep} gives a  representation of $\Spin(10)$ on $\mathcal H\cong\C^{16}\otimes\C^2$ that is a Weyl spinor representation with multiplicity two.
\end{example}

The next example is new.
\begin{example}  \label{ex:PS}
Consider the tensor product of unitary irreducible modules for $\Cl(4,0)$ and $\Cl(0,6)$, as in Theorem \ref{spinrep}. This results in the Dirac spinor representation of $\Spin(10)$ on $\mathcal H=\C^4\otimes\C^8=\C^{32}$.
\end{example}
This construction has a distinguished subgroup of $\Spin(10)$, the Pati-Salam group 
$G_{PS}=\Spin(4)\times\Spin(6)/\Z_2$. It is shown in the next section that Example \ref{ex:PS} can be used to construct spectral triples for the internal space of the Pati-Salam model. 

\section{Spectral triple}
A real spectral triple \cite{connesNoncommutativeGeometryReality1995} is a mathematical structure for a Dirac operator on a spin manifold or its non-commutative generalisation. There is a $*$-algebra $\mathcal A$ generalising the algebra of functions in the manifold case. The fermion fields form a Hilbert space $\mathcal H$ that is a bimodule over $\mathcal A$.
The real structure is an antilinear operator $J$ that interchanges the left and right actions, generalising the real structure of a Clifford module. There is a Dirac operator $D$ and a chirality operator $\gamma$ on $\mathcal H$ obeying the same relations with $J$ as in Section \ref{sec:CliffordModules}, with $D$ in place of $\gamma^a$.

 An example of a spectral triple is constructed here, starting from the data in Example \ref{ex:PS}.
 The real spectral triple is defined using the same Hilbert space $\mathcal H, J=J_1\otimes J_2, \gamma=\gamma_1\otimes\gamma_2$ and algebra
\begin{equation}\mathcal A=\Cl(4,0)_\text{even}\oplus\Cl(0,6)_\text{even}\cong (\HH\oplus\HH)\oplus M_4(\C).
\end{equation}
Denote the projection onto the positive chirality eigenspace for the second action by $\pi_2^+$, and write $\pi_2^-=1-\pi_2^+$. Then $J\pi_2^+=\pi_2^-J$.

The left action of $a=(a_1,a_2)\in\mathcal A$ on $\mathcal H$ is defined to be
\begin{equation}l(a)=c_1(a_1)\otimes \pi_2^+ + 1\otimes c_2(a_2)\pi_2^-.
\end{equation}
The right action of the spectral triple can be computed from it
\begin{equation}r(a)= Jl(a^*)J^{-1}=c_1(a_1^*)\otimes\pi_2^- + 1\otimes c_2(a_2^*)\pi_2^+
\end{equation}
and it can be seen that the \emph{zeroth-order condition}
\begin{equation}[l(a),r(b)]=0
\end{equation}
is satisfied, making $\mathcal H$ a bimodule over $\mathcal A$. This part of the construction gives the same Hilbert space and algebra as the Connes-Chamsesddine spectral triples for the internal space of the Pati-Salam model \cite{chamseddineConceptualExplanationAlgebra2007, chamseddineSpectralStandardModel2013}. An internal space for a particle physics model has the same fields but with the spacetime manifold replaced by a single point.

In noncommutative geometry, the gauge group consists of unitary elements of the algebra ($u^*=u^{-1}$) acting on $\mathcal H$ by the adjoint action
\begin{equation} u\mapsto l(u)r(u^*).
\end{equation}
However, if one takes \emph{all} unitary elements, the resulting group in this example is $\SU(2)\times\SU(2)\times\U(4)/\Z_2$, which is larger than $G_{PS}$. One has to add a rather ad hoc extra condition called Connes' unimodular condition \cite{connesNoncommutativeGeometryStandard2006} to reduce the gauge group to $G_{PS}$. This condition is that $\det(l(u))=1$. 

However in the current construction, the Clifford structure gives a natural alternative way of defining the gauge group. It is defined by the elements  $u=(u_1,u_2)\in\Spin(4)\times\Spin(6)$, which form a subgroup of the unitary elements of $\mathcal A$. This subgroup automatically satisfies the unimodular condition. The gauge group action on $\mathcal H$ is the adjoint action
\begin{multline}l(u)r(u^*)=(c_1(u_1)\otimes \pi_2^+ + 1\otimes c_2(u_2)\pi_2^-)(c_1(u_1)\otimes\pi_2^- + 1\otimes c_2(u_2)\pi_2^+)\\=c_1(u_1)\otimes c_2(u_2),
\label{gauge}\end{multline}
which gives a faithful action of the correct subgroup $G_{PS}\subset\Spin(10)$. Another bonus from this construction is that it automatically leads to the correct charges of the fermions in $\mathcal H$ for one generation of the Standard Model in a conceptual way: it is because the representation of $G_{PS}$ extends to a Dirac spinor representation of $\Spin(10)$ by an application of Theorem \ref{spinrep}. This is the correct representation for the gauge charges of the fermion fields and their conjugates \cite{baezAlgebraGrandUnified2010}.

Dirac operators for this spectral triple are defined by
\begin{equation}D=c_1(d)\otimes 1\label{PSDirac}
\end{equation}
where $d$ is any Hermitian element of $\Cl(4,0)_\text{odd}$. This can be written in terms of the gamma matrices as $D=d_a\gamma^a_1\otimes 1$ for a vector $(d_1,d_2,d_3,d_4)\in\R^4$.   The operator $D$ is Hermitian, commutes with $J$, anticommutes with $\gamma$ and satisfies the \emph{first order condition}
 \begin{equation}[[D,l(a)],r(b)]=0
\end{equation}
for all $a,b\in\mathcal A$, defining a real spectral triple of $KO$-dimension $s=6$. 

In an internal space, the physical interpretation of the components of the Dirac operator depends on the action of the gauge group,
\begin{equation}D\mapsto l(u)r(u^*) \,D\, l(u^*)r(u).
\end{equation}
The components of $D$ that are invariant are simply constants, and so determine mass parameters in the Dirac action, whereas the non-invariant components are Higgs fields having a Yukawa interaction with the fermions.

Here, due to \eqref{gauge}, the gauge transformation is $d\mapsto u_1du_1^*$. Therefore $d$ is the electro-weak Higgs field, with the gauge group acting in the vector representation of $\Spin(4)$. Note that a Clifford algebra interpretation of this Higgs field has been given previously in the context of the graded tensor product of Clifford algebras \cite{todorovSuperselectionWeakHypercharge2021}.

%Note that this Dirac operator is more specific than the ones considered by Connes and Chamseddine in \cite{chamseddineConceptualExplanationAlgebra2007} because they did not use the first-order condition. 

As a final remark, one can easily generalise the construction of spectral triples to other Clifford algebras, but it is not yet clear which ones are interesting for physical models.

\section*{Acknowledgement} Thanks are due for the hospitality of Nichol Furey and the Physics Department of Humboldt University and a visitor grant from the Kolleg Mathematik Physik Berlin that enabled the initial phase of this research.

\bibliographystyle{plain}

\bibliography{../../JohnZoteroLibrary}

\begin{thebibliography}{10}

\bibitem{atiyahCliffordModules1964}
M.~F. Atiyah, R.~Bott, and A.~Shapiro.
\newblock Clifford modules.
\newblock {\em Topology}, 3:3--38, July 1964.

\bibitem{baezAlgebraGrandUnified2010}
John Baez and John Huerta.
\newblock The algebra of grand unified theories.
\newblock {\em Bulletin of the American Mathematical Society}, 47(3):483--552,
  July 2010.

\bibitem{barrettMatrixGeometriesFuzzy2015}
John~W. Barrett.
\newblock Matrix geometries and fuzzy spaces as finite spectral triples.
\newblock {\em Journal of Mathematical Physics}, 56(8):082301, August 2015.

\bibitem{chamseddineConceptualExplanationAlgebra2007}
Ali~H. Chamseddine and Alain Connes.
\newblock Conceptual {{Explanation}} for the {{Algebra}} in the
  {{Noncommutative Approach}} to the {{Standard Model}}.
\newblock {\em Physical Review Letters}, 99(19):191601, November 2007.

\bibitem{chamseddineSpectralStandardModel2013}
Ali~H. Chamseddine, Alain Connes, and Walter~D. {van Suijlekom}.
\newblock Beyond the spectral standard model: Emergence of {{Pati-Salam}}
  unification.
\newblock {\em Journal of High Energy Physics}, 2013(11):132, November 2013.

\bibitem{chevalleyConstructionStudyCertain1955}
Claude Chevalley.
\newblock {\em The {{Construction And Study Of Certain Important Algebras}}}.
\newblock The Mathematical Society Of Japan., 1955.

\bibitem{connesNoncommutativeGeometryReality1995}
Alain Connes.
\newblock Noncommutative geometry and reality.
\newblock {\em Journal of Mathematical Physics}, 36(11):6194--6231, November
  1995.

\bibitem{connesNoncommutativeGeometryStandard2006}
Alain Connes.
\newblock Noncommutative geometry and the standard model with neutrino mixing.
\newblock {\em Journal of High Energy Physics}, 2006(11):081--081, November
  2006.

\bibitem{fureyOneGenerationStandard2022}
N.~Furey and M.~J. Hughes.
\newblock One generation of standard model {{Weyl}} representations as a single
  copy of {{R}}{$\otimes$}{{C}}{$\otimes$}{{H}}{$\otimes$}{{O}}.
\newblock {\em Physics Letters B}, 827:136959, April 2022.

\bibitem{h.blainelawsonSpinGeometry1990}
{H. Blaine Lawson} and {Marie-Louise Michelsohn}.
\newblock {\em Spin {{Geometry}}}.
\newblock Number~38 in Princeton {{Mathematical Series}}. Princeton University
  Press, February 1990.

\bibitem{todorovSuperselectionWeakHypercharge2021}
Ivan Todorov.
\newblock Superselection of the weak hypercharge and the algebra of the
  {{Standard Model}}.
\newblock {\em Journal of High Energy Physics}, 2021(4):164, April 2021.

\end{thebibliography}

\end{document}